\documentclass{article}
\usepackage{graphicx} 
\usepackage{fullpage}

\usepackage{amsmath,amssymb,amsfonts,amsthm}
\newcommand{\F}{\mathbb{F}}
\DeclareMathOperator{\rank}{rank}

\usepackage[colorlinks]{hyperref}
\usepackage[capitalize,nameinlink]{cleveref}

\hypersetup{
  linkcolor=[rgb]{0,0,0.4},
  citecolor=[rgb]{0, 0.4, 0},
  urlcolor=[rgb]{0.6, 0, 0}
}

\newtheorem{theorem}{Theorem}[section]
\newtheorem{lemma}[theorem]{Lemma}
\newtheorem{claim}[theorem]{Claim}
\newtheorem{corollary}[theorem]{Corollary}

\newcommand{\calS}{\mathcal{S}}
\newcommand{\Q}{\mathbb{Q}}

\newcommand{\set}[1]{\left\{ #1 \right\}}

\title{Improved Algorithms for the Remote Point Problem}
\author{Ben Lee Volk\thanks{Efi Arazi School of Computer Science, Reichman University, Israel. Email: \texttt{benleevolk@gmail.com}. The research leading to these results has received funding from the  Israel Science Foundation (grant number 843/23). }}
\date{}

\begin{document}

\maketitle

\begin{abstract}
 The Remote Point Problem (RPP) is an algorithmic problem that asks, given a linear subspace $L \subseteq \F^n$ of dimension $k$, to deterministically find a vector $v \in \F^n$ far in Hamming distance from $L$. This problem was introduced by Alon, Panigrahy and Yekhanin \cite{APY09}, motivated in part by the matrix rigidity approach for proving circuit lower bounds. An algorithm is said to achieve remoteness $d$ if it finds a vector $v$ whose Hamming distance from $L$ is at least $d$.
 
We observe that over the rational numbers, the problem admits a deterministic polynomial-time algorithm that achieves optimal remoteness $n-k$.

Over finite fields, we obtain a (modest) improvement of a result of Alon, Panigrahy and Yekhanin \cite{APY09}, and give an algorithm that achieves remoteness $\Omega\left(\frac{n}{\max\{k, \log n\}} \log n\right)$.

\end{abstract}

\section{Introduction}

The \emph{Remote Point Problem} (RPP) over a field $\F$ is the following natural algorithmic problem: given (the basis of) a linear subspace $L \subseteq \F^n$ of dimension $k$ and a natural number $d$, find a vector $v \in \F^n$ which is $d$-far, in Hamming distance, from all points in $L$.

One motivation for studying this problem comes from the matrix rigidity approach for proving algebraic circuit lower bounds: Valiant \cite{Valiant77} defined a matrix $M \in \F^{n \times n}$ to be $(r,s)$-rigid if it cannot be written as a sum $M=R+S$ of a matrix $R$ of rank at most $r$ and a matrix $S$ with $s$-sparse rows.\footnote{Sometimes it is more convenient to bound the total sparsity of $S$, rather than the sparsity per-row. However, the differences between the two definitions are quite inconsequential, and for the purpose of proving circuit lower bounds it is enough to consider matrices in which every row is sparse, which is the definition that is more convenient for us.} In other words, for every $r$-dimensional subspace $L$, $M$ contains a row that is $(s+1)$-far (in Hamming distance) from $L$.  Thus, RPP can be seen as a ``white-box'' version of the problem: instead of constructing a set of vectors that work for all subspaces, the algorithm gets to ``see'' the subspace $L$ and to construct a bespoke $d$-far vector that may depend on $L$.

Random matrices are rigid with very good parameters, and Valiant has shown that explicit constructions of rigid matrices with good enough parameters will imply new circuit lower bounds. See \cite{Lokam09, Ramya20} for surveys on this topic. Similarly, in the RPP setting, a counting argument implies that as long as the dimension of $L$ is not too large, most vectors are indeed pretty far from $L$. For concreteness, consider an $n/2$-dimensional space $L \subseteq \F_2^n$: $L$ has only $2^{n/2}$ vectors, so a random vector $v \in \F_2^n$ has distance $\Omega(n)$ from $L$. The problem is to construct such a vector $v$ \emph{deterministically}. Hence, it is also a natural question in the theory of algebraic pseudorandomness.

RPP was introduced by Alon, Panigrahy and Yekhanin \cite{APY09}, who also gave a deterministic polynomial-time algorithm that finds an $\Omega(\log n)$-far vector from $n/2$-dimensional subspaces, or more generally, an $(n \log k / k)$-far vector from $k$-dimensional subspaces. Their algorithm is phrased over $\F_2$, but can be extended to any field.

Arvind and Srinivasan discovered more connections between RPP and circuit lower bounds \cite{AS10-lb}, and further generalized the algorithm of Alon, Panigrahy and Yekhanin \cite{APY09} to finite groups, and obtained a parallel polynomial-time algorithm (with similar remoteness guarantees) for abelian groups \cite{AS10-epsbiased}.

In this paper we improve those algorithms in several settings. First, we consider RPP over the field of rational numbers $\Q$. This problem turns out to have an optimal and very simple algorithm that relies mostly on elementary linear algebra.

\begin{theorem}
\label[theorem]{thm:rpp-over-Q}
There exists a deterministic polynomial-time algorithm that, given a basis of a $k$-dimensional subspace $L \subseteq \Q^n$, finds a vector $v$ which is $(n-k)$-far from $L$.
\end{theorem}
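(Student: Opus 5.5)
The plan is to reduce remoteness to avoiding a finite (but exponentially large) family of hyperplanes, and then to pick $v$ on the moment curve with a parameter large enough to avoid all of them at once. For $S \subseteq [n]$ write $E_S = \mathrm{span}\{e_i : i \in S\}$. A vector $v$ is at distance at most $n-k-1$ from $L$ iff there is $u \in L$ with $|\mathrm{supp}(v-u)| \le n-k-1$. This happens iff $v \in L + E_S$ for some $S$ with $|S| = n-k-1$. Each such space has dimension at most $k + (n-k-1) = n-1$, so it is proper. (If $k=n$ there is nothing to prove.)

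Next I would embed each $L+E_S$ in a hyperplane given by an integer linear form with controlled coefficients. First clear denominators so that $L$ has a basis $b_1,\dots,b_k \in \mathbb{Z}^n$ with entries of bit-length polynomial in the input size. Fix $S$ of size $n-k-1$. Starting from $L+E_S$, add standard basis vectors one at a time; each addition raises the dimension by at most one and the process ends at $\Q^n$. Hence some $S' \supseteq S$ gives $\dim(L+E_{S'}) = n-1$ exactly. Pick $n-1$ linearly independent vectors among $b_1,\dots,b_k$ and $\{e_i : i \in S'\}$, and let $h_S$ be the linear form $x \mapsto \det(\text{these } n-1 \text{ vectors}, x)$. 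Then $h_S$ is nonzero, vanishes on $L+E_S$, and has integer coefficients that are $(n-1)\times(n-1)$ minors of an integer matrix. By Hadamard's inequality these coefficients are bounded in absolute value by a number $H$ with $\log H = \mathrm{poly}(n, \text{input bits})$, and $H$ is independent of $S$.

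For the construction, set $v(t) = (1, t, t^2, \dots, t^{n-1})$. For each $S$, $p_S(t) := h_S(v(t))$ is a nonzero integer polynomial of degree at most $n-1$ whose coefficients have absolute value at most $H$. By Cauchy's bound, every real root of $p_S$ satisfies $|t| < 1 + H$; this uses that the leading nonzero coefficient is an integer, hence of absolute value at least $1$. So choosing the integer $t = H + 2$ (or any $t \ge H+1$) gives $p_S(t) \neq 0$ for every $S$ simultaneously. Therefore $v(t) \notin L+E_S$ for all $|S| = n-k-1$, i.e.\ $v$ is $(n-k)$-far from $L$. The entries $t^j$ have $O(n \log H)$ bits, so $v$ is computable in polynomial time. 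Note that we never enumerate the $S$'s: only the uniform bound $H$ is needed.

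The main obstacle, and the point I would be most careful about, is getting this uniform bound. The union over exponentially many $S$ rules out naive counting of bad parameters $t$, so the argument must instead exhibit a single explicit threshold beyond which no $p_S$ can vanish. This requires an explicit integer normal vector for every hyperplane, including the case where $\dim(L+E_S) < n-1$, which is what the extension to $S'$ handles. It also requires the coefficient bound on that normal vector to be independent of $S$, which Hadamard's bound on the minors provides.
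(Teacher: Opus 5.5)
Your proposal is correct, and its core mechanism is the paper's. Both arguments build an integer annihilating vector from cofactors, with a size bound that is uniform over all index sets. Both test it against a moment-curve vector whose base exceeds that bound, so the top nonzero term dominates; your Cauchy-bound step is the paper's base-$K$ estimate in different form.

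The difference is the space you work in. The paper first proves the syndrome characterization (\cref{lem:syndrome-characterization}) and computes a parity-check matrix $H$. It then shows that $s=(1,K,\dots,K^{m-1})$ avoids the span of every $m-1$ columns of $H$, extending the chosen columns to a rank-$(m-1)$ submatrix just as you extend $S$ to $S'$. Finally it solves $Hv=s$. You stay in $\Q^n$ and avoid the hyperplanes containing $L+E_S$ directly. Since $L+E_S$ is exactly the preimage under $H$ of the span of the columns of $H$ indexed by $S$, the two arguments are dual versions of each other.

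The paper's route gives a field-independent statement about column spans of a full-row-rank matrix, which it uses to frame the finite-field open problem, and it works in the smaller dimension $n-k$. Your route needs neither $H$ nor the final linear solve. It also shows that $v(t)$ depends on $L$ only through $n$ and an entry bound for an integer basis. So a single vector is simultaneously $(n-\dim L)$-far from every subspace spanned by integer vectors of bounded size. That is closer in spirit to the semi-explicit rigid matrices over $\Q$ mentioned in the paper's open problems.
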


We remark that $(n-k)$ is the optimal remoteness parameter one could expect for a $k$-dimensional subspace. This follows for example from \cref{lem:syndrome-characterization} that we prove in \cref{sec:syndrome}. While the magnitude of the coordinates of the vector $v$ may be exponential in $n$, their bit complexity is merely polynomial in $n$ (and in the bit complexity of the given basis for $L$). It is interesting, of course, to obtain a vector with smaller entries satisfying the remoteness property, and this might be helpful in making progress for RPP over finite fields.

We find it quite surprising that such a simple and optimal algorithm exists for RPP over $\Q$. While, to the best of our knowledge, RPP over $\Q$ was not explicitly considered in the literature before, constructing rigid matrices over $\Q$ is a well-known and equally interesting open problem as in the case of finite fields (we refer again to the surveys \cite{Lokam09, Ramya20} for more background). It is therefore interesting that RPP seems to behave genuinely differently in this setting.

We also obtain a modest improvement to the algorithm of \cite{APY09}, when the dimension $k$ is subpolynomial in $n$.

\begin{theorem}
\label[theorem]{thm:intro:small-improvement}
Let $\F$ be a finite field and $L \subseteq \F^n$ be a linear subspace of dimension $k \le n/2$. There exists an algorithm that runs in time polynomial in $n$ and finds a point that is $\Omega(\frac{n}{\max\{k, \log n\}} \cdot \log n)$ far from $L$.
\end{theorem}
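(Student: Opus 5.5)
Write $t = \max\{k, \lceil \log n\rceil\}$. The plan is to cut the coordinates into about $n/(2t)$ blocks of size $m = 2t$ and find, inside each block, a vector that is $\Omega(\log n)$-far from the projection of $L$ onto that block. Summing over blocks then gives remoteness $\frac{n}{2t}\cdot \Omega(\log n)$, as required. The sum is legitimate because Hamming distance is additive over a partition of the coordinates. For any $u\in L$, the distance from the concatenation $v=(v_1,\dots,v_{n/m})$ to $u$ is $\sum_i d(v_i, u|_{B_i})$, and each term is at least $d(v_i, L|_{B_i})$. It therefore suffices to take, in block $B_i$, a vector $v_i\in\F^{m}$ that is $\Omega(\log n)$-far from the subspace $L|_{B_i}$. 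That subspace has dimension at most $k\le t = m/2$, and we can compute a basis for it.

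This reduces the theorem to one subroutine: given a subspace $L'\subseteq \F^m$ of dimension at most $m/2$, with $m = \Theta(\max\{k,\log n\})$, find in time $\mathrm{poly}(n)$ a vector at distance $\Omega(\log n)$ from $L'$. The choice of $m$ is what makes brute force affordable. By a counting argument, a random vector of $\F^m$ is $\Omega(m)$-far from $L'$ when the dimension is at most $m/2$. Concretely, the radius-$r$ Hamming ball has volume $\binom{m}{r}(q-1)^r \le q^{m/2-1}$ for $r=cm$ with a small constant $c$, so $|L'|$ times the ball volume is less than $q^m$.

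The difficulty is finding such a vector deterministically when $k$ is large. The case analysis is as follows.

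If $k\le \log n$, then $m=\Theta(\log n)$ and $|\F^m| = q^{O(\log n)}$. This is polynomial in $n$ for constant $q$; for general $q$, restrict to vectors with entries in $\{0,1\}$ and count over $2^m$ instead of $q^m$. We enumerate candidates $w$ and test each one's distance to $L'$ by enumerating low-weight error patterns $e$ of weight $r=\Omega(\log n)$ and checking membership $w-e\in L'$ by linear algebra. The number of such patterns is $\binom{m}{r}$ times the nonzero values available, which is $n^{O(1)}$ when restricted suitably. A cleaner route is to compute the syndrome of $w$ with respect to a parity check matrix of $L'$. I expect this step to be the main technical nuisance, specifically keeping the enumeration polynomial over large fields. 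I would handle it by choosing the constant in $r=c\log n$ small enough.

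If $k>\log n$, the target per block is still only $\Omega(\log n)$, not $\Omega(m)$. Here I would split each block further into sub-blocks of size about $2\log n$ to reduce to the first case, but the projected dimension does not shrink under that split. So instead I would invoke the APY-style greedy/derandomized step on each block of size $2k$ directly, which already gives distance $\Omega(\log k)$. That alone falls short, since $\log k$ can be smaller than $\log n$. The real fix I would try first is to run the method of conditional expectations on each block. The potential is the number of pairs $(u,e)$ with $u\in L'$ and $e$ of weight below $c\log n$ such that $w=u+e$. This expectation is less than $1$ over a random $w$ in a suitable space, and it must be computable in $\mathrm{poly}(n)$ time after partial fixings of $w$. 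Computing these conditional counts is the obstacle; the plan is to express them through the coset structure and syndromes of $L'$ restricted to the unfixed coordinates.
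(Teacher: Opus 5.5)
Your block decomposition, the additivity of Hamming distance over blocks, and brute force in the case $k \le \log n$ are all fine. This holds at least for a fixed field, as in \cref{thm:small-improvement}. Your $\{0,1\}$-restriction for growing $|\F|$ needs an actual counting argument, since $|L'|\cdot|B_r|$ no longer beats $2^m$, but that can be repaired.

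The genuine gap is the case $k > \log n$. This is exactly the regime where the theorem improves on \cite{APY09}, and there you stop at a plan. Your potential is the expected number of pairs $(u,e)$ with $u \in L'$, $\mathrm{wt}(e) < c\log n$ and $w = u+e$. After each partial fixing of $w$, computing it means counting the vectors of each small weight in a coset of the projection of $L'$ onto the fixed coordinates. You give no way to do this efficiently. Enumerating error patterns costs about $\binom{2k}{c\log n}$, which is already $n^{\Theta(\log\log n)}$ for $k = \log^2 n$. Enumerating $L'$ costs $|\F|^{k}$. So the step carrying the whole content of the theorem is missing.

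The missing idea is that the $\log k$ in \cite{APY09} is a parameter choice, not a limitation of their method. Their algorithm rests on the covering result \cref{thm:base-algo}: $L' + B_d$ lies in a union of $\binom{t}{d}$ subspaces, each of dimension at most $\dim L' + \frac{d}{t}b$. A potential-function search then finds a point outside the union in time polynomial in the ambient dimension $b$ and in $\binom{t}{d} = 2^{O(d)}$. The running time does not otherwise depend on $k$.

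Since your budget is $\mathrm{poly}(n)$ rather than $\mathrm{poly}(k)$, you can aim directly for $d = \log n$. Take $t = 4\log n$ on blocks of size $b = \max\{2k, 20\log n\}$. Each covering subspace then has dimension at most $k + b/4 \le 3b/4$. The union has at most $n^4|\F|^{3b/4} < |\F|^{b}$ points, because $|\F|^{b/4} \ge n^5$. The search runs in time $\mathrm{poly}(n)$ and gives a point that is $\log n$-far on every block.

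Equivalently, your conditional-expectation idea goes through if you replace the pair count by $\sum_{V\in\calS}\Pr[w\in V \mid \text{fixed coordinates}]$. Each term of this sum is a simple linear-algebra computation.
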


The assumption $k \le n/2$ is not crucial. A similar theorem can be derived whenever $k \le (1-\varepsilon)n$ for a constant $\varepsilon>0$. This will only affect the constant hidden under the big $\Omega$.

In \cite{APY09}, a similar theorem was proved with remoteness $\Omega(\frac{n}{k} \cdot \log k)$. When $k$ is subpolynomial in $n$, \cref{thm:small-improvement} gives an improvement.
For example, when $k=\log^C n$, the algorithm of \cite{APY09} gives remoteness of $\frac{n}{k} \log k$ whereas we can get remoteness $\frac{n}{k} k^{1/C}$.

\subsection{Technique}
\label[section]{sec:technique}

As mentioned above, the proof of \cref{thm:rpp-over-Q} is remarkably and surprisingly simple. We first observe an equivalent formulation for remoteness from a subspace, by considering what is known in coding theory as the \emph{syndrome} of the vector $v$, which is the vector $Hv \in \Q^{m}$ where $m=n-k$ and $H \in \Q^{m\times n}$ is a parity-check matrix for $L$, that is, $\ker H = L$. The vector $v$ is $d$-close to $L$ if and only if its syndrome $s$ can be written as a linear combination of at most $d$ columns of $H$. It follows that in order to solve RPP it is enough to find a syndrome vector $s$ that cannot be written as a linear combination of $d-1$ columns of $H$ (this is proved in \cref{sec:syndrome}). The description so far is in fact true over any field.

Over $\Q$, we find such $s$ simply by defining $s=(1,K,K^2, \ldots, K^{m-1})$ where $K$ is a certain (exponentially large) quantity that depends on the entries of $H$. We wish to show that $s$ is not in the image of any $m \times (m-1)$ submatrix $A$ of $H$. Elementary linear algebra guarantees the existence of a vector $a \in \Q^m$ with ``small'' entries such that $a^T A = 0$. In particular, the entries of $a$ in absolute value are much smaller than $K$, which guarantees that $a^T s \neq 0$. This implies that $s$ is not in the column span of $A$. The proof appears in \cref{sec:rpp-Q}.

\cref{thm:intro:small-improvement} is proved using a completely different technique. The main ingredient is an algorithm of \cite{APY09} that finds a vector $\Omega(\log n)$-far from subspaces of dimension $n/2$. To handle $k$-dimensional subspaces, \cite{APY09} partition the $n$ coordinates into $n/2k$ blocks of size $2k$ each. The projection of a $k$-dimensional subspace to each block (of size $2k$) has dimension $k$, which enables them to find a vector $v_i$ that is $\Omega(\log k)$-far on the $i$-th block. Concatenating the $n/2k$ vectors $v_1,v_2,\ldots,v_{n/2k}$ gives a vector of length $n$ that is $\Omega(n \log k / k)$-far.

We observe that since the algorithm is allowed to be polynomial in $n$, we can use the basic algorithm of \cite{APY09} with slightly different parameters on each block. Suppose for simplicity that $k \ge C \log n$ for some large enough constant $C$. We use the algorithm of \cite{APY09} to find a vector $v_i$ that is $\Omega(\log n)$-far from the projection of $L$ on the $i$-th block (instead of $\Omega(\log k)$): the distance is therefore larger than logarithmic in the ambient dimension $2k$. However, since we aim for an algorithm that is polynomial in $n$ (rather than polynomial in $k$), this still fits within the polynomial time budget. The details appear in \cref{sec:rpp-finite}.

\section{Syndrome-Based Characterization of Remote Points}
\label[section]{sec:syndrome}

We begin by showing the following elementary lemma, which relates the distance of $v$ from a subspace $L$ to the expressibility of the \emph{syndrome} of $v$, the vector $Hv \in \F^{n-k}$ (where $H$ is the parity-check matrix of $L$) as a linear combination of the columns of $H$. The results in this section hold over any field.

\begin{lemma}
\label[lemma]{lem:syndrome-characterization}
    Let $L$ be a linear subspace of dimension $k$, and let $H$ be a parity-check matrix for $L$, i.e., an $(n-k)\times n$ matrix of full row rank such that $L=\ker H$. Let $v \in \F^n$ and $s=Hv \in \F^{n-k}$. Then $v$ is $d$-close to $L$ if and only if $s$ can be written as a linear combination of at most $d$ columns of $H$.
\end{lemma}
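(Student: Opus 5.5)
The plan is to prove both directions directly from the definition of $d$-closeness, namely that there exist $u \in L$ with $\Delta(v,u) \le d$, equivalently that $v = u + e$ with $u\in L$ and $e$ of Hamming weight at most $d$. The only input needed is that $L = \ker H$. Write $h_1,\ldots,h_n$ for the columns of $H$, so that $He = \sum_{i} e_i h_i$ for every $e \in \F^n$.

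For the forward direction, suppose $v$ is $d$-close to $L$. Choose $u \in L$ with $e := v-u$ supported on a set $S$ with $|S| \le d$. Since $Hu = 0$, we get $s = Hv = He = \sum_{i\in S} e_i h_i$. This is a linear combination of at most $d$ columns of $H$.

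For the converse, suppose $s = \sum_{i \in S} c_i h_i$ with $|S| \le d$. Define $e\in\F^n$ by $e_i = c_i$ for $i\in S$ and $e_i=0$ otherwise, so $\mathrm{wt}(e)\le d$ and $He = s$. Then $H(v-e) = s - s = 0$, so $u := v-e \in \ker H = L$. Since $v-u = e$ has weight at most $d$, $v$ is $d$-close to $L$.

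There is no real obstacle here. The only point needing care is a consistent convention: ``$d$-close'' should mean distance at most $d$, and ``$d$-far'' its negation (or distance $\ge d$, with the off-by-one absorbed as in the paper's use of $d-1$ columns). Full row rank of $H$ is not needed for the equivalence itself; it matters only for remarks such as optimality, where $n-k$ columns of $H$ span $\F^{n-k}$ and hence every $v$ is $(n-k)$-close to $L$.
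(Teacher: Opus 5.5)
Your proof is correct and follows essentially the same route as the paper: write $v = u + e$ with $u \in L$ and $\mathrm{wt}(e) \le d$, so $Hv = He$ is a combination of at most $d$ columns; conversely, build $e$ from the coefficients and check that $v - e \in \ker H = L$. Your side remark that full row rank is not needed for the equivalence itself is also accurate.
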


\begin{proof}
In one direction, suppose $v$ is $d$-close to $L$. Then $v = u + \alpha_1 e_{i_1} + \alpha_2 e_{i_2} + \cdots + \alpha_{d'} e_{i_{d'}}$ for $u \in L$, elementary basis vectors $e_{i_1}, e_{i_2}, \ldots, e_{i_{d'}}$, scalars $\alpha_1, \ldots, \alpha_{d'} \in \F$ and $d' \le d$. Then $s = Hv = H(u + \alpha_1 e_{i_1} + \alpha_2 e_{i_2} + \cdots + \alpha_{d'} e_{i_{d'}}) = \sum_{j=1}^{d'} \alpha_j H_{i_j}$ where $H_{i_j}$ is the $i_j$-th column of $H$ (since $Hu=0$ as $u \in L$). 

Conversely, if $s = \sum_{j=1}^{d'} \alpha_j H_{i_j}$ for some $d'\le d$ and $i_1, \ldots, i_{d'}$, then $v - \sum_{j=1}^{d'} \alpha_j e_{i_j}$ is in $L$: indeed, $H(v - \sum_{j=1}^{d'} \alpha_j e_{i_j}) = s - \sum_{j=1}^{d'} \alpha_j H_{i_j} = 0$.
\end{proof}

As a corollary, finding a vector $v$ which is $d$-far from $L$ reduces to finding a vector $s$ that \emph{cannot} be expressed as a linear combination of $d-1$ columns of $H$.

\begin{corollary}
\label[corollary]{cor:reduction-to-syndrome}
Suppose there exists a deterministic polynomial-time algorithm that, given a matrix $H \in \F^{(n-k)\times n}$ of full row-rank, finds a vector $s \in \F^{n-k}$ such that $s$ cannot be written as a linear combination of $d-1$ columns of $H$. Then there exists an algorithm that finds a vector $v$ which is $d$-far from $\ker H$.
\end{corollary}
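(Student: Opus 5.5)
The plan is to turn the syndrome $s$ produced by the hypothesized algorithm into an actual vector $v$ with $Hv = s$, and then read off remoteness from \cref{lem:syndrome-characterization}.

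First, run the given algorithm on $H$ to obtain $s \in \F^{n-k}$ that is not a linear combination of any $d-1$ columns of $H$. Next, compute some $v \in \F^n$ with $Hv = s$. Such a $v$ exists because $H$ has full row rank $n-k$, so $H$ maps $\F^n$ onto $\F^{n-k}$. To find it concretely, use Gaussian elimination: pick $n-k$ linearly independent columns of $H$, forming an invertible $(n-k)\times(n-k)$ submatrix $B$. Set the coordinates of $v$ on those column indices to $B^{-1}s$, and set all other coordinates to $0$. This takes polynomial time, and over $\Q$ the bit complexity stays polynomial by the standard bounds for Gaussian elimination.

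It remains to show that $v$ is $d$-far from $\ker H = L$. Suppose instead that $v$ is not $d$-far, i.e., $v$ is $(d-1)$-close to $L$. Applying \cref{lem:syndrome-characterization} with $d-1$ in place of $d$, the syndrome $Hv = s$ is a linear combination of at most $d-1$ columns of $H$. This contradicts the choice of $s$. Here I use the convention that being $d$-far means having distance at least $d$, so "not $d$-far" is exactly "$(d-1)$-close". Also, "a combination of at most $d-1$ columns" can be padded with zero coefficients to a combination of exactly $d-1$ columns, as long as $d-1 \le n$. So the two phrasings of the hypothesis agree.

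No step here is a real obstacle. The only points needing care are the surjectivity of $H$, which is where full row rank is used, and matching the "at most" phrasing in the lemma with the "$d-1$ columns" phrasing in the corollary.
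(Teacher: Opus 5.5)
Your proof is correct and follows the paper's argument. You solve $Hv=s$, which is solvable because $H$ has full row rank, and then apply \cref{lem:syndrome-characterization} with $d-1$ in place of $d$; your explicit handling of the contrapositive and of ``at most $d-1$'' versus ``exactly $d-1$'' columns just spells out what the paper leaves implicit.
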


\begin{proof}
Let $L=\ker H$.
By solving a system of linear equations we can find a vector $v$ such that $Hv=s$. Note that a solution always exists since $\rank(H)=n-k$, so for every $s \in \F^{n-k}$, the linear system has a solution. By \cref{lem:syndrome-characterization}, $v$ is $d$-far from $L$.
\end{proof}

\section{The Remote Point Problem over the Rational Numbers}
\label[section]{sec:rpp-Q}

In this section, we observe a very simple polynomial-time algorithm with optimal parameters for the remote point problem over the field of rational numbers. The algorithm uses the reduction of \cref{sec:syndrome} in order to find, given a basis of a $k$-dimensional subspace $L \subseteq \Q^n$, a vector $v$ which is $(n-k)$-far from $L$.

Let $H \in \Q^{m \times n}$ be the parity-check matrix of $L$, where $m=n-k$. By clearing denominators, we may assume $H$ has integer entries. Further, clearing denominators increases the bit complexity of each entry of $H$ by at most a polynomial factor. Now, let $M=\max_{i,j}\set{|H_{i,j}|}$ be the maximum magnitude of an element in $H$, and define $K':=m!M^m$ and $K=2K'+1$. Finally, let $s \in \Q^{m}$ be defined as
\begin{equation}
\label{eq:syndrome}
s := (1,K,K^2, \ldots, K^{m-1}).
\end{equation}

Observe once again that the bit complexity of $K$ is polynomial in the bit complexities of the elements of $H$ (and hence polynomial in the input length), and thus the same is true for the entries of the vector $s$.

\begin{claim}
\label[claim]{cl:s-is-far}
For any $m' \le m-1$, $s$ cannot be written as a linear combination of $m'$ columns of $H$.
\end{claim}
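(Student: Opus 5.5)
It suffices to handle $m' = m-1$, since any combination of fewer columns can be padded with zero coefficients on extra columns ($n \ge m$). So fix any $m-1$ columns of $H$ and let $A \in \mathbb{Z}^{m \times (m-1)}$ be the resulting submatrix. The plan is to exhibit a nonzero integer vector $a \in \mathbb{Z}^m$ with $a^T A = 0$ and $|a_i| \le K'$ for all $i$, and then show $a^T s \neq 0$. That rules out $s = Ax$: otherwise we would have $a^T s = a^T A x = 0$.

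\emph{Constructing $a$.} Let $r = \rank(A) \le m-1$ and pick $r$ linearly independent rows of $A$, together with an $r \times r$ nonsingular minor $B$ on columns $C$. Since $r < m$, there is a row index $i_0$ outside the chosen rows. Consider the $(r+1)\times(r+1)$ matrix formed by the chosen rows together with row $i_0$, restricted to the columns in $C$ plus any one further column $j$. Its determinant is zero: if $j \in C$ the matrix has a repeated column, and otherwise every $(r+1)$-minor of a rank-$r$ matrix vanishes. Expanding this determinant along the last column yields the relation $\sum_i a_i A_{i,j} = 0$, where the coefficients $a_i$ are signed $r\times r$ minors that do not depend on $j$. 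The coefficient on row $i_0$ is $\pm\det B \neq 0$, so $a$ is nonzero. (This is the standard Cramer's-rule argument.)

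\emph{Bounding $a$.} By Hadamard, or simply by counting permutations, each $r \times r$ minor has absolute value at most $r!\,M^r \le m!\,M^m = K'$, using $M \ge 1$.

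\emph{Showing $a^T s \ne 0$.} Let $t$ be the largest index with $a_t \neq 0$. Then
\[
|a^T s| \ge |a_t| K^{t-1} - \sum_{i<t} |a_i| K^{i-1} \ge K^{t-1} - K'\,\frac{K^{t-1}-1}{K-1}.
\]
Since $K - 1 = 2K'$, the subtracted term is less than $K^{t-1}/2$, so $|a^T s| > 0$. Equivalently, view $a^T s$ as the base-$K$ expansion with digits $|a_i| \le K' < K/2$, which forces it to be nonzero.

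The main obstacle is producing the small left-kernel vector with a clean entry bound, in particular handling the case where $A$ is rank deficient. The minor-expansion argument above does this uniformly. The remaining steps are routine.
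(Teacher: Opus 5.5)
Your proof is correct and follows the paper's route: a nonzero integer left-kernel vector for the chosen columns, built from signed minors and hence bounded by $K'$, followed by the base-$K$ dominance argument showing $a^T s\neq 0$. The only difference is how you keep that vector nonzero: the paper first discards dependent columns and extends to $m-1$ linearly independent ones (possible because $H$ has full row rank), so the plain cofactors $(-1)^i\det(A_i)$ of the full-rank $m\times(m-1)$ matrix suffice, whereas you handle a rank-deficient $A$ directly via a maximal nonsingular minor, which does not need the full-row-rank hypothesis and makes your padding step unnecessary.
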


\begin{proof}
Consider any set of $m' \le m-1$ columns of $H$. Clearly, we may assume the columns are linearly independent, and by adding more columns if necessary, we may assume the set has size exactly $m-1$. Let $A$ denote the resulting $m \times (m-1)$ submatrix of $H$.

We now show, by a Cramer's rule-type argument, that there exists a non-zero vector $a \in \Q^m$ with ``small'' entries such that $a^T A = 0$ but $a^T s \neq 0$, implying that $s$ is not in the image of the columns of $A$.

For $i \in [m]$, let $A_i$ denote the $(m-1) \times (m-1)$ submatrix of $A$ obtained by deleting the $i$-th row, and let $a_i = (-1)^i \det(A_i)$. Note that $|a_i| \le m!M^m = K'$. Define $a=(a_1, \ldots, a_m)$, and observe that $a$ is not the zero vector, since $\rank(A)=m-1$ so at least one of the $A_i$'s has full rank and $\det(A_i) \neq 0$.

We claim that $a^T A = 0$: indeed, if $c_j$ denotes the $j$-th column of $A$, then the fact that $a^T c_j = 0$ can be seen by considering the $m \times m$ matrix $A^{(j)}$ obtained by appending $c_j$ as the $m$-th column of $A$, and expanding the determinant of $A^{(j)}$ (which equals $0$, since the column $c_j$ appears in it twice) along the last column.

Finally, we show that $a^T s \neq 0$. Let $j_0$ be the largest index for which $a_{j_0}$ is non-zero. Then $|a_{j_0} s_{j_0}| \ge K^{j_0-1}$, whereas
\[
\left| \sum_{i=1}^{j_0-1} a_i s_i \right| \le \sum_{i=1}^{j_0-1} |a_i s_i| \le  K'(1+K+K^2 + \cdots + K^{j_0-2}) < K^{j_0-1}
\]
(since $K' < K$, and by the uniqueness of expansion in base $K$). It follows that
\[
a^T s = \sum_{i=1}^{j_0} a_i s_i \neq 0. \qedhere
\]
\end{proof}

\begin{corollary}
\label[corollary]{cor:rpp-over-Q}
There exists a deterministic polynomial-time algorithm that, given a basis of a $k$-dimensional subspace $L \subseteq \Q^n$, finds  a vector $v$ which is $(n-k)$-far from $L$.
\end{corollary}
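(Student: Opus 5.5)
The plan is to combine \cref{cl:s-is-far} with the reduction of \cref{cor:reduction-to-syndrome}, and to check that every step runs in polynomial time with polynomial bit complexity.

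\textbf{Step 1: compute a parity-check matrix.} Given a basis of $L$, we compute a matrix $H \in \Q^{m\times n}$, with $m=n-k$, of full row rank such that $\ker H = L$. Its rows form a basis of $L^\perp$, which Gaussian elimination on the $k\times n$ basis matrix produces. Gaussian elimination over $\Q$ runs in polynomial time and keeps bit sizes polynomial, by the standard Edmonds-type bound. We then multiply each row by the common denominator of its entries, so that $H$ becomes integral. This does not change $\ker H$ or the rank, and by the paper's earlier remark it increases bit sizes only polynomially. If $m=0$ the statement is trivial, since $0$-far holds for every vector.

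\textbf{Step 2: build the syndrome and lift it.} We compute $M=\max_{i,j}|H_{i,j}|$, then $K=2\,m!\,M^m+1$, and set $s=(1,K,\ldots,K^{m-1})$ as in \eqref{eq:syndrome}. The number of bits of $K$ is $O(m\log m + m\log M)$, which is polynomial, so each entry $K^{i}$ has polynomially many bits. By \cref{cl:s-is-far}, $s$ is not a linear combination of any $m-1$ columns of $H$. We apply \cref{cor:reduction-to-syndrome} with $d=m=n-k$: we solve $Hv=s$ by Gaussian elimination, which is possible because $H$ has full row rank. By \cref{lem:syndrome-characterization}, the resulting $v$ is not $(n-k-1)$-close to $L$, so it is $(n-k)$-far.

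\textbf{Main obstacle.} The only real point to check is efficiency. The input size for the final linear system is polynomial, because $s$ has $m$ entries of $\mathrm{poly}$ bits each. Gaussian elimination over $\Q$ is polynomial in the bit size of its input, so $v$ also has polynomial bit complexity. Combining this with Step 1 gives a deterministic polynomial-time algorithm.
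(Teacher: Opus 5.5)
Your proposal is correct and follows the paper's proof essentially verbatim: construct the syndrome $s=(1,K,\ldots,K^{m-1})$, use \cref{cl:s-is-far} to rule out writing it with $m-1$ columns of $H$, and conclude via \cref{cor:reduction-to-syndrome}. The extra bookkeeping you add (computing an integral full-row-rank parity-check matrix and bounding the bit sizes of $K$, $s$ and $v$) just makes explicit the efficiency remarks the paper states in the text around the claim.
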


\begin{proof}
The algorithm first constructs $s \in \Q^{n-k}$ as in \eqref{eq:syndrome} (as commented above, this can be done in polynomial time). By \cref{cl:s-is-far}, $s$ cannot be expressed as a linear combination of less than $n-k$ columns of $H$. The statement now follows from \cref{cor:reduction-to-syndrome}.
\end{proof}

\section{The Remote Point Problem over Finite Fields}
\label[section]{sec:rpp-finite}

In this section we obtain an improvement to the algorithm of Alon, Panigrahy and Yekhanin \cite{APY09}. As a part of their algorithm, they prove the following theorem:

\begin{theorem}[\cite{APY09}]
\label[theorem]{thm:base-algo}
    Let $L \subseteq \F^n$ be a linear subspace, and let $B_d$ denote the Hamming ball of radius $d$. There exists a collection $\calS$ of $\binom{t}{d}$ subspaces, each of dimension at most $\dim L + \frac{d}{t}n$, such that $L + B_d \subseteq \bigcup_{V\in \mathcal{S}} V$.
\end{theorem}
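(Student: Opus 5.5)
The idea is to partition the coordinates $[n]$ into $t$ blocks $P_1,\dots,P_t$ of (nearly) equal size, each of size at most $\lceil n/t\rceil$. To avoid rounding issues we assume $t$ divides $n$; otherwise the block sizes differ by one and the bound $\frac{d}{t}n$ holds up to rounding. For every set $T \subseteq [t]$ with $|T| = d$ we define the subspace
\[
V_T := L + \mathrm{span}\set{e_i : i \in \textstyle\bigcup_{j\in T} P_j}.
\]
There are exactly $\binom{t}{d}$ such sets $T$, and this collection will be $\calS$.

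\textbf{Dimension bound.} Since $V_T$ is the sum of $L$ and a coordinate subspace of dimension $\sum_{j\in T}|P_j| = d\cdot \frac{n}{t}$, subadditivity of dimension gives
\[
\dim V_T \le \dim L + \frac{d}{t}n .
\]

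\textbf{Covering.} Let $w \in L + B_d$, say $w = u + e$ with $u\in L$ and $e$ supported on a set $S$ with $|S|\le d$. Each coordinate of $S$ lies in exactly one block, so $S$ meets at most $d$ blocks. We choose $T\supseteq\set{j : P_j\cap S\neq\emptyset}$ with $|T|=d$, padding with arbitrary additional blocks; this is possible provided $d\le t$, which we assume, since otherwise $\binom{t}{d}=0$ and the statement is vacuous or meaningless. Then $e$ is supported on $\bigcup_{j\in T}P_j$, so $e$ lies in the coordinate span, and hence $w = u+e\in V_T$. This shows $L+B_d\subseteq\bigcup_{V\in\calS}V$.

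There is no real obstacle in this argument. The only care needed is with divisibility of $n$ by $t$, where blocks of size $\lceil n/t\rceil$ give dimension at most $\dim L + d\lceil n/t\rceil$, and with the standing assumption $d \le t$. For the algorithmic use, note that bases of all the $V_T$ can be listed explicitly in time $\mathrm{poly}(n)\binom{t}{d}$.
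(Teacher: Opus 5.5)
Your proof is correct. It is the standard construction from \cite{APY09}: split $[n]$ into $t$ equal blocks and take $L$ plus the coordinate span of each union of $d$ blocks. The paper itself only quotes this theorem and gives no proof.

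Your caveat about rounding is genuinely needed, not cosmetic. As stated, the bound $\dim L+\frac{d}{t}n$ implicitly assumes $t \mid n$. For example, with $L=\{0\}$, $n=5$, $t=2$, $d=1$, no two subspaces of dimension $2$ can together contain all of $e_1,\dots,e_5$. Your fix, blocks of size at most $\lceil n/t\rceil$, costs less than an additive $d$ in the dimension.
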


If $\dim L \le n/2$, $t=4d$ and $d=O(\log n)$, $\F^n$ contains a vector not in $\bigcup_{V\in \mathcal{S}} V$ which is found using a potential-function-based algorithm in \cite{APY09}, whose running time is polynomial in $n$ and $\binom{t}{d} = 2^{O(d)}$. When $d=O(\log n)$, this algorithm runs in polynomial time. A different and parallel algorithm was given by Arvind and Srinivasan \cite{AS10-epsbiased}, but they also use the structural result of \cref{thm:base-algo}. We remark that \cite{APY09} state their result for the field $\F_2$, but the proof can be easily modified for other finite fields, and was extended by Arvind and Srinivasan \cite{AS10-epsbiased} further to finite abelian groups.

We also use \cref{thm:base-algo} in order to prove:

\begin{theorem}
\label[theorem]{thm:small-improvement}
Let $\F$ be a fixed finite field and $L \subseteq \F^n$ be a linear subspace of dimension $k \le n/2$. There exists an algorithm that runs in time polynomial in $n$ and finds a point that is $\Omega(\frac{n}{\max\{k, \log n\}} \cdot \log n)$ far from $L$.
\end{theorem}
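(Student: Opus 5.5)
The plan is to split into the two regimes $k \le C\log n$ and $k > C\log n$, with $C$ a large constant, and in both cases to partition the coordinates into blocks. On each block we run the base algorithm of \cite{APY09}, built on \cref{thm:base-algo}, with a distance parameter $d = \Theta(\log n)$ chosen independently of the block size. The concatenation of the block vectors is then far from $L$.

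\textbf{Block reduction.} Let $b = 2\max\{k, C\log n\}$, rounded so that $b \le n$, and partition $[n]$ into $\lfloor n/b\rfloor$ blocks $P_1,\dots$ of size $b$, with any leftover coordinates set to $0$. Let $L_i$ be the projection of $L$ onto $\F^{P_i}$. Then $\dim L_i \le k \le b/2$. Suppose that for each $i$ we find $v_i \in \F^{P_i}$ that is $d$-far from $L_i$. Let $v$ be the concatenation of the $v_i$. For any $u\in L$, the restriction $u|_{P_i}$ lies in $L_i$, so $v$ and $u$ differ in at least $d$ coordinates of each block. Hence $v$ is $d\lfloor n/b\rfloor$-far from $L$. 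With $d = c\log n$ this gives $\Omega(\frac{n}{\max\{k,\log n\}}\log n)$, as required. If $b > n$, which can only happen when $n = O(\log n)$, or when $k\le n/2$ forces $b \le n$ after adjusting constants, we use a single block and the target $\Omega(1)$ is trivial: a vector outside $L$ exists and can be found by linear algebra.

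\textbf{Per-block step.} Apply \cref{thm:base-algo} to $L_i \subseteq \F^b$ with $t = 4d$ and $d = c\log n$. This gives $\binom{4d}{d} = 2^{O(d)} = n^{O(c)}$ subspaces, each of dimension at most $b/2 + b/4 = 3b/4$, whose union covers $L_i + B_{d}$. The main obstacle is checking that the potential-function algorithm of \cite{APY09} finds a vector outside this union in time polynomial in $n$, even though $d$ is now allowed to exceed $\log b$. The running time is $\mathrm{poly}(b, \binom{t}{d}) = \mathrm{poly}(n)$, so time is fine. What must be verified is existence together with the correctness of the greedy/potential argument. The union has at most $2^{O(d)}\,|\F|^{3b/4}$ points, which is less than $|\F|^b$ provided $b/4 \cdot \log|\F| > O(d)$. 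This holds because $b \ge 2C\log n$ and $d = c\log n$ with $c$ small relative to $C$. I would rerun the \cite{APY09} potential argument, which fixes coordinates one at a time while keeping $\sum_{V\in\calS}|\F|^{-\mathrm{codim}}$-type potential below $1$, and check that it only needs the initial potential $\binom{t}{d}|\F|^{-b/4} < 1$, not any relation $d = O(\log b)$.

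Having obtained $v_i \notin \bigcup \calS \supseteq L_i + B_{d}$, each $v_i$ is $(d+1)$-far from $L_i$. The block reduction then completes the proof, with total time $(n/b)\cdot \mathrm{poly}(n)$.
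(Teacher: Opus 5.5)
Your proposal is correct and follows the paper's proof. The paper takes $b=\max\{2k,20\log n\}$, applies \cref{thm:base-algo} on each block of size $b$ with $d=\log n$ and $t=4\log n$, bounds the union by $n^4|\F|^{3b/4}<|\F|^b$, and concatenates, which is your argument with specific constants ($c<C/8$ suffices in your parametrization); your additional bookkeeping (floors, leftover coordinates, the bounded-$n$ case, and the observation that the conditional-expectation algorithm only needs $\binom{t}{d}|\F|^{-b/4}<1$) makes explicit points the paper leaves implicit, though your sentence on when $b>n$ occurs is garbled (it happens only when $2C\log n>n$, i.e., for $n$ bounded by a constant).
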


\begin{proof}
    Let $b=\max\{2k, 20 \log n\}$.
    Given a subspace $L$ of dimension $k$, partition the $n$ coordinates into $n/b$ blocks, denoted $B_1, \ldots, B_{n/b}$, of size $b$ each.
    The projection of $L$ on each block $B_i$ has dimension at most $k$.

    For every $i \in [n/b]$, use \cref{thm:base-algo} (with ambient dimension $b$) to find a point $v_i \in \F^{b}$ that is $d$-far from the projection of $L$ on $B_i$, for $d=\log n$ and $t=4 \log n$. This is possible, since the number of subspaces in the collection $\calS$ in \cref{thm:base-algo} is at most $\binom{t}{d} \le 2^{4 \log n} = n^4$ and each subspace has dimension at most
    \[
    k + \frac{d}{t}b \le \frac{b}{2} + \frac{b}{4} = \frac{3b}{4},
    \]
    where the inequality uses the fact that $k \le \frac{b}{2}$ and $\frac{d}{t}=\frac{1}{4}$.
    Hence, the total number of points covered by the union of subspaces in $\calS$ is at most 
    \[
    n^4 \cdot |\F|^{\frac{3b}{4}} < |\F|^{b},
    \]
    where we use the facts that $b \ge 20 \log n$ and $|\F| \ge 2$, so $n^4 < |\F|^{b/4}$.
    Hence, there exists a point that is not in the union of subspaces in $\calS$, which will be found by the algorithm from \cref{thm:base-algo}.
    
    The concatenation of $v_1, ..., v_{n/b}$ gives a point that is $\frac{n}{b} \cdot \log n$ far from $L$ since for every $u \in L$, the restriction of $u$ to $B_i$ differs from $v_i$ in at least $\log n$ coordinates.
\end{proof}

The difference comes from the fact that the analogous theorem in \cite{APY09} uses the same strategy but on every block finds a point that is $(\log k)$-far. The observation is that when $k$ is very small, we can run the algorithm from \cref{thm:base-algo} with improved remoteness parameters since we only want the running time to be polynomial in $n$.

\section{Open Problems}

This work raises several natural open problems.

\begin{itemize}
\item Can the algorithm from \cref{thm:rpp-over-Q} be adapted to obtain a deterministic algorithm for RPP over finite fields, for any remoteness parameter $d=\omega(\log n)$? Small fields like $\F_2$ seem like the hardest case; perhaps a more viable intermediate goal is to consider fields with size growing polynomially or even exponentially with $n$, noting that even when the field size is exponential in $n$, representing each element in it requires only polynomially many bits.
\item Can the algorithm from \cref{thm:rpp-over-Q} be adapted to obtain an explicit construction of rigid matrices over $\Q$? It is possible to obtain ``semi-explicit'' rigid matrices over $\Q$ in which the entries are integers of doubly-exponential magnitude (and hence exponential bit complexity) \cite{Lokam09, Ramya20}. The techniques for proving that such matrices are rigid are somewhat reminiscent of our proof of \cref{thm:rpp-over-Q}. Nevertheless, we are not able to obtain matrices whose entries are integers (or rationals) with polynomial bit complexity.

\item A natural intermediate goal is the construction of \emph{rigid sets}: these are sets that are guaranteed to contain, for every subspace $L$, a vector far from $L$. A rigid matrix corresponds to such a set of size $n$ (the set is the rows of the matrix), but one could relax that requirement and try to obtain sets that are as small as possible. \cite{APY09} construct such sets of size $2^{O(d)} n/d$ for remoteness parameter $d$ (see also \cite{AC15} for alternative constructions with roughly the same size). Constructions of size polynomial in $n$ for $d=\omega(\log n)$ are of great interest, even over $\Q$. They do not seem to imply any circuit lower bounds.

\end{itemize}

\section*{Acknowledgements}
The author thanks Mrinal Kumar and Ramprasad Saptharishi for various useful discussions on the remote point problem.
ChatGPT was used for help in proof organization and simplification, and proofreading. The paper was written by the human author.

\bibliographystyle{alphaurlpp}
\bibliography{references}

\end{document}